\newtheorem{theorem}{Theorem}[section]
\newtheorem{lemma}[theorem]{Lemma}
\newtheorem{claim}[theorem]{Claim}
\newtheorem{definition}[theorem]{Definition}
\newtheorem{corollary}[theorem]{Corollary}
\newtheorem{observation}[theorem]{Observation}
\newenvironment{proof}{\begin{trivlist}\item[]\hspace{\parindent}%
{\em Proof.}}{$\Box$\end{trivlist}}
\newenvironment{proofof}[1]{\begin{trivlist}\item[]\hspace{\parindent}%
{\em #1}}{$\Box$\end{trivlist}}
\def\alt#1#2#3#4{\left\{\begin{array}{ll} #1\  & \mbox{ if } #2\\
#3\  & \mbox{ if } #4\end{array}\right.}
\def\comb#1#2{\left(\begin{array}{c} #1 \\ #2 \end{array}\right)}
\newenvironment{enumerate.roman}%
{\begin{enumerate}}%
{\end{enumerate}}
\newenvironment{enumerate.alph}%
{\begin{enumerate}}%
{\end{enumerate}}
\def\tile#1#2#3#4#5{
\begin{picture}(72, 72)
\put(0,0){\framebox(72, 72){}}
\put(4, 36){\makebox(0,0)[l]{#1}}
\put(36, 4){\makebox(0,0)[b]{#2}}
\put(68, 36){\makebox(0,0)[r]{#3}}
\put(36, 66){\makebox(0,0)[t]{#4}}
\put(36, 36){\makebox(0,0){\Large{#5}}}
\end{picture}
}
\def\tilestart#1#2#3#4#5{
\begin{picture}(72, 72)
\put(0,0){\framebox(72, 72){}}
\put(4, 36){\makebox(0,0)[l]{#1}}
\put(36, 4){\makebox(0,0)[b]{#2}}
\put(68, 36){\makebox(0,0)[r]{#3}}
\put(36, 66){\makebox(0,0)[t]{#4}}
\put(36, 36){\makebox(0,0){\Large{#5}}}
\put(70, 0){\line(0, 72){72}}
\put(0, 70){\line(72, 0){72}}
\end{picture}
}
\def\tileleft#1#2#3#4#5{
\begin{picture}(72, 72)
\put(0,0){\framebox(72, 72){}}
\put(4, 36){\makebox(0,0)[l]{#1}}
\put(36, 4){\makebox(0,0)[b]{#2}}
\put(68, 36){\makebox(0,0)[r]{#3}}
\put(36, 66){\makebox(0,0)[t]{#4}}
\put(36, 36){\makebox(0,0){\Large{#5}}}
\put(0, 70){\line(72, 0){72}}
\put(0, 2){\line(72, 0){72}}
\end{picture}
}
\def\tilebottom#1#2#3#4#5{
\begin{picture}(72, 72)
\put(0,0){\framebox(72, 72){}}
\put(4, 36){\makebox(0,0)[l]{#1}}
\put(36, 4){\makebox(0,0)[b]{#2}}
\put(68, 36){\makebox(0,0)[r]{#3}}
\put(36, 66){\makebox(0,0)[t]{#4}}
\put(36, 36){\makebox(0,0){\Large{#5}}}
\put(70, 0){\line(0, 72){72}}
\put(2, 0){\line(0, 72){72}}
\end{picture}
}
\def\LL{{\cal L}}
\def\TT{{\cal T}}
\begin{document}

\title{Self-assembly of the discrete Sierpinski carpet and related fractals\\(Preliminary version)}
\author{Steven M.\ Kautz\\Department of Computer Science\\Iowa State University\\Ames, IA 50014 U.S.A.\\smkautz@cs.iastate.edu%
\and James I.\ Lathrop\\Department of Computer Science\\Iowa State University\\Ames, IA 50014  U.S.A.\\jil@cs.iastate.edu}
\date{}
\maketitle

\begin{abstract}
It is well known that the discrete Sierpinski triangle can be defined as the nonzero residues modulo 2 of Pascal's triangle, and that from this definition one can easily construct a tileset with which the discrete Sierpinski triangle self-assembles in Winfree's tile assembly model.  In this paper we introduce an infinite class of discrete self-similar fractals that are defined by the residues modulo a prime p of the entries in a two-dimensional matrix obtained from a simple recursive equation.
We prove that every fractal in this class self-assembles using a uniformly constructed tileset.  As a special case we show that the discrete Sierpinski carpet self-assembles using a set of 30 tiles.
\end{abstract}

\section{Introduction}
A model for self-assembly is a computing paradigm
in which many small components interact locally, without external direction,
to assemble themselves into a larger structure.  Wang \cite{Wang61,Wang63}. first
investigated the self-assembly of patterns in the plane from a finite set
of square tiles.  In Wang's model, a tile is a square with a label
on each edge that determines which other tiles in the set can lie
adjacent to it in the final structure. The {\em Tile Assembly Model\/}
of Winfree \cite{Winf98}, later revised by Rothemund and Winfree \cite{RotWin00,Roth01},
refines the Wang model to provide an abstraction for the physical self-assembly
of DNA molecules.

We introduce some formal notation for the Tile Assembly Model in the next section.
Briefly, a tile is a square with a label on
each edge, which we represent as a string, but in addition each edge has
a integer {\em bonding strength\/} of $0$, $1$, or $2$, represented in Figure~{\ref{atile}}
by a dashed line, a solid line, or a double line, respectively.
A tile may also have a label in the center for informational purposes.
Tiles are assumed not to rotate.  Two tiles can potentially lie adjacent to each other
only if the adjacent edges have the same label
and the same bonding strength.
Intuitively, the bonding strength and edge label model the
bonding strength and ``sticky ends'' of a
specially constructed DNA molecule, as illustrated in Figure~{\ref{atile}}

\begin{figure}
\begin{center}
\includegraphics[width=3.5in]{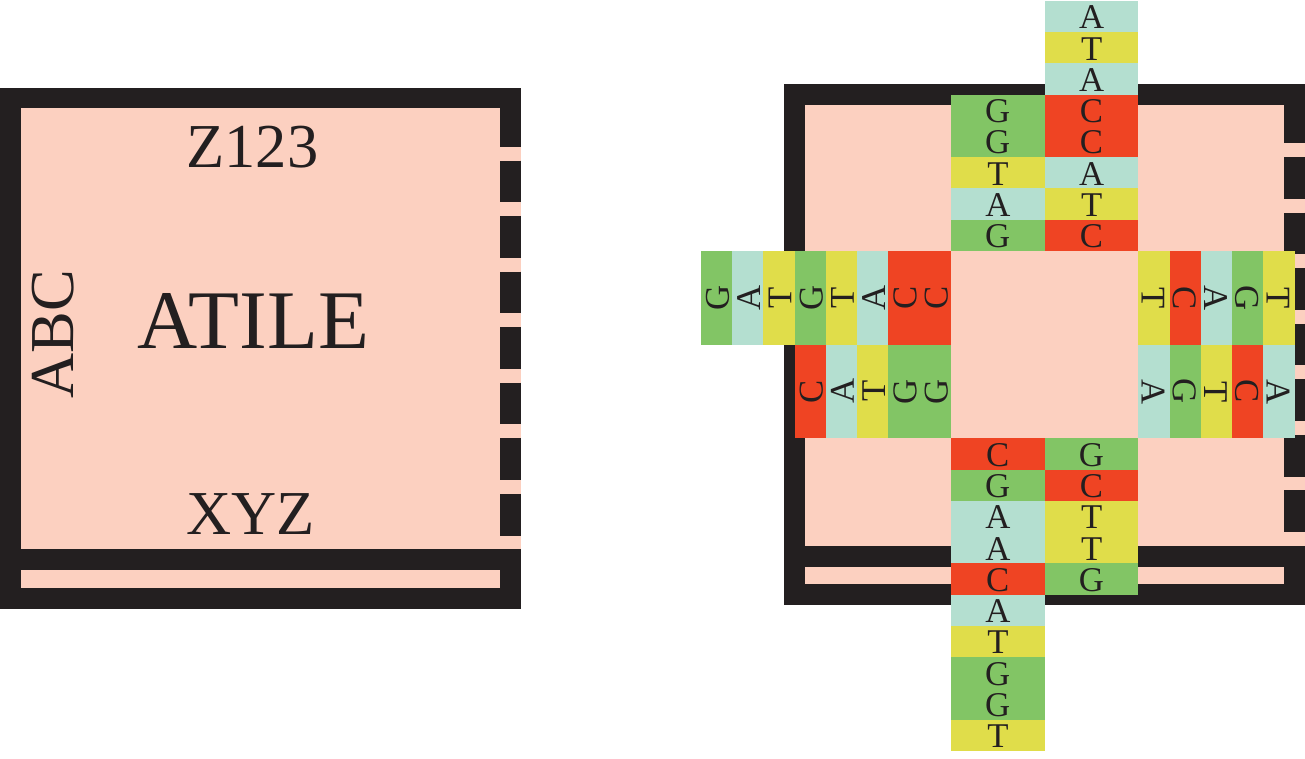}
\caption{Winfree tile model.}
\label{atile}
\end{center}
\end{figure}

A tile system is assumed to start with an infinite supply of a finite
number of tile types.  A set of initial tiles, the seed assembly, is placed
in the discrete plane. Self-assembly proceeds nondeterministically
as new tiles bond to the existing assembly.  The ability of tiles
to bond is controlled by a system parameter called the temperature.
In this paper we are concerned with temperature $2$ systems, which means that a tile may bond to an existing assembly only if the sum of the bonding strengths of the edges in the tile that abut the assembly is at least $2$.

In this process, tiles may cooperate to create a planar structure,
or (by appropriate interpretation of the labels) perform a computation.
Winfree \cite{Winf98} and others \cite{RotWin00,Roth01,SSADST,CCSA} have shown that such
systems can perform computations such as counting and addition, and
that in fact the model is universal: given an arbitrary Turing machine,
there is a tile set for which each row of the resulting assembly is the
result of a computation step of the Turing machine.
It is also possible to use a finite tile set
to generate infinite planar structures
such as discrete fractals.
The latter was made famous when Papadakis, Rothemund and Winfree \cite{PaRoWi04} performed an
experiment in which actual DNA molecules were used to self-assemble
a portion of the discrete Sierpinski triangle.

The discrete Sierpinski triangle has been used extensively as a test structure for
in DNA self-assembly \cite{Winf98}.
One reason for this is that it self-assembles using a simple set of only $7$ tiles
\cite{Winf98}.
More generally, however, fractal structures are of interest because
``Structures that self-assemble in naturally occurring biological systems are often
fractals of low dimension, by which we mean that they are usefully modeled
as fractals and that their fractal dimensions are less than the dimension of the
space or surface that they occupy. The advantages of such fractal geometries
for materials transport, heat exchange, information processing, and robustness
imply that structures engineered by nanoscale self-assembly in the near future
will also often be fractals of low dimension.'' \cite{SSADST}
It is then natural to ask what other discrete fractals, other than the
ubiquitous Sierpinski triangle, can self-assemble with a relatively
small set of tile types in this model.

In this paper we introduce an infinite class of self-similar discrete fractals,
all of which self-assemble in Winfree's model.  The class includes,
as special cases, the standard Sierpinski triangle and
Sierpinski carpet.  All the fractals in this class exhibit a strong
self-similarity property that we call {\em numerical self-similarity}.
Each fractal is defined in terms of an
infinite integer matrix $M$ whose entries
are residues modulo a given prime $q$.
A fractal $S$ (as a subset of the first quadrant of integer plane)
can then be defined as the set of points
$(i, j)$ for which $M[i, j]$ is not congruent to zero, modulo $q$.
The usual notion of self-similarity for a set $S$ means that
there is an integer $p$ such that for any
$s,t < p$, the set of points
consisting of the $p^k$ by $p^k$ square whose lower left corner is at
$sp^k, tp^k$ is either empty, if $(s, t) \not\in S$,
or is an exact copy
of the $p^k$ by $p^k$ square wiht lower left corner at the origin,
if $(s, t)$ is in $S$.

Numerical self-similarity means further that
the entries of the $p^k$ by $p^k$ submatrix of $M$ with lower left
corner at $M[sp^k, tp^k]$ are always related to those of the
$p^k$ by $p^k$ submatrix at the origin by a factor of $M[s, t]$, that is,
\[
M[sp^k + i, tp^k + j] \equiv M[i,j]M[s,t]
\]
for $i,j < p^k$.
One consequence of our result is that there exists a simple recursively defined
matrix $M$ that defines the
discrete Sierpinski carpet, using $p$ = 3.
Figure~{\ref{carpet1}} is an illustration of the discrete Sierpinski carpet as a subset
of the plane, and Figure~{\ref{carpet2}} is a depiction of the mod $3$ residues of
the matrix $M$ that defines it.  It follows from the simple definition of the matrix
$M$ and Theorem \ref{constructiontheorem} that the Sierpinski carpet self-assembles in the Tile
Assembly Model.

\begin{figure}
\begin{center}
\includegraphics[width=3.0in]{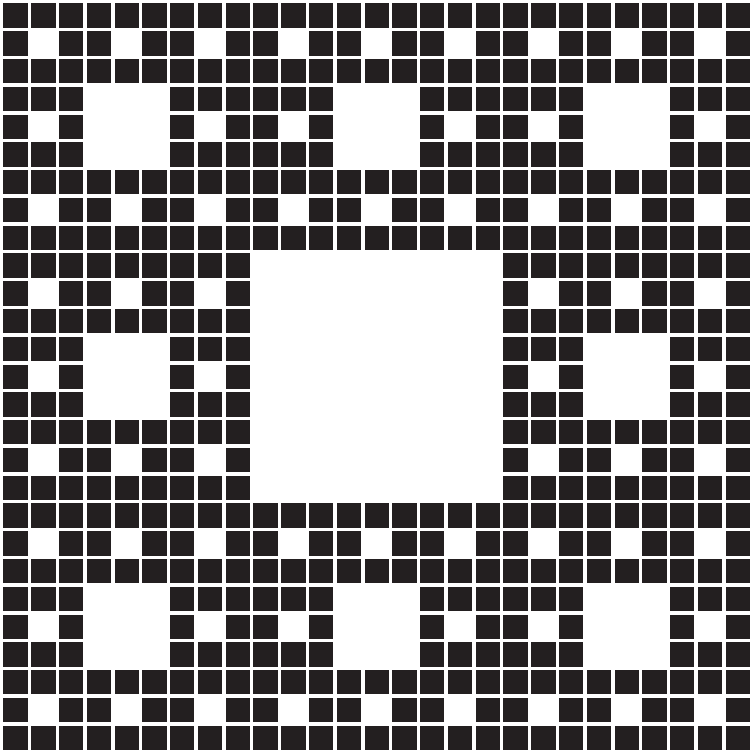}
\label{carpet1}
\caption{First three stages of the Sierpinski carpet.}
\end{center}
\end{figure}
\begin{figure}
\begin{center}
\includegraphics[width=3.0in]{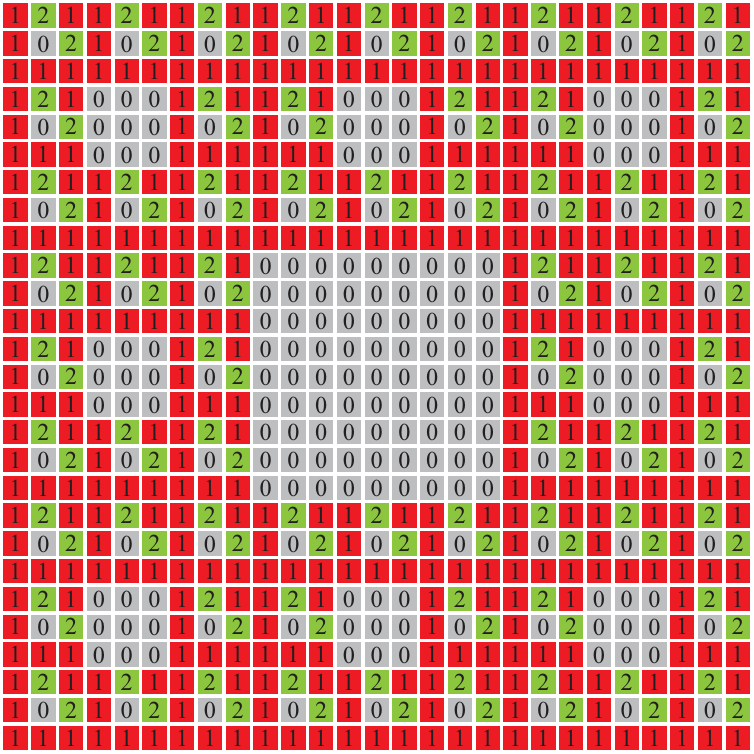}
\label{carpet2}
\caption{First three stages of the numerically self-similar Sierpinski carpet.}
\end{center}
\end{figure}

The next section introduces some definitions and notation
for the Tile Assembly Model described above.  Section \ref{section_triangle} reviews
some known results on the relationship between the Sierpinski triangle
and Pascal's triangle and gives a uniform construction of tilesets
for recursively defined matrices.  In Section \ref{section_delannoy} we define the
matrices from which we obtain discrete fractals and prove the main result
on numerical self-similarity of such matrices.  Section \ref{section_conclusion}
contains some concluding remarks and open problems.

\section{Preliminaries}
\label{section_preliminaries}
In this section we introduce some notation and terminology associated with the 
Tile Assembly Model described in the introduction.  The
description given here
should be sufficient for our purposes; for formal details
see \cite{RotWin00,Roth01,Winf98}.

We work in the discrete Euclidean plane.
Let $U_2$ denote the set of unit vectors, denoted as cardinal directions
$\vec{N} = (0,1)$, $\vec{S} = (0, -1)$, $\vec{E} = (1, 0)$, and
$\vec{W} = (-1, 0)$. Let $\Sigma$ be a finite alphabet. 
A {\em tile type\/} $t$ is a a pair of functions $(\mbox{col}_t, \mbox{str}_t)$
where $\mbox{col}_t: U_2 \rightarrow \Sigma^*$ and 
$\mbox{col}_t: U_2 \rightarrow \mathbb{N}$.
That is, a tile type associates a {\em color\/} $\mbox{col}_t(\vec{u}) \in \Sigma^*$
and a {\em strength\/} $\mbox{str}_t(\vec{u})$ with each of the four sides
of a unit square, where the side is indicated by the unit vector $\vec{u}$. 
We also assume that there is a {\em label\/} associated with each tile
type by a function $m: T \rightarrow \LL$, where $\LL$ is 
a finite alphabet (which in the examples of interest will be
a set $\{0, 1, \ldots, p - 1\}$ for some prime $p$).

Let $T$ denote a finite set of tile types and let
$\tau \in \mathbb{N}$ be a fixed parameter, called the 
{\em temperature\/} (which in the present paper is always $2$).
  In general we assume that
there is an infinite supply of tiles for each type $t \in T$.  
A tile may be positioned, but not rotated, in the discrete plane.
As described in
the introduction, two adjacent tiles may {\em bond\/} 
if the abutting edges
have matching color and matching strength $s$;
the strength of the bond is $s$.
More generally,
an {\em assembly\/} is a partial assignment 
$\alpha: \mathbb{Z} \rightarrow T$
of tile types to locations in the plane
in which each tile is
bonded to its neighbors with a {\em total\/} strength
of at least $\tau$, and such that the assembly cannot be separated into 
smaller assemblies without breaking a set of bonds having a
total strength of at least $\tau$. 

The process of self-assembly begins with a given {\em seed assembly\/}
$\sigma$ and proceeds nondeterministically by
extending the domain of the assembly, where
a new tile may extend an assembly at position $(x,y)$ 
if all edges abutting those of existing adjacent tiles
have matching colors and matching strengths and if the sum of 
the strengths for the abutting edges is at least $\tau$.  An assembly
is {\em terminal\/} if it cannot be extended.

A {\em tile assembly system (TAS)\/} is a triple $(T, \sigma, \tau)$
where $T$ is a finite set of tile types, $\sigma$ is 
the seed assembly, and $\tau$ is the temperature.  A TAS is
{\em definitive\/} if it has a unique terminal assembly.

We define self-assembly for first for matrices with values in
an arbitrary alphabet, and
then for subsets of the discrete plane.
\begin{definition}
\label{subset_def}
Fix an alphabet
$\LL$.  For any matrix $M$ with values in $\LL$ and
any subset $\LL'$ of $\LL$, the {\em set $S \subseteq \mathbb{Z}^2$
determined by $(M, \LL')$\/} is the set of points
$(x,y)$ such that $M[x,y]$ is defined and $M[x,y] \in \LL'$.
\end{definition}

\begin{definition}
\label{selfassembly_def}
\begin{enumerate.alph}
\item
Let $\LL$ be a finite alphabet and $M$ a matrix, possibly infinite, 
with values in $\LL$.  $M$ {\em self-assembles\/}
if there exists a definitive TAS $(T, \sigma, \tau)$ with terminal 
assembly $\alpha$, and a labeling $m: T \rightarrow \LL$,
such that $\alpha(x,y)$ is defined if and only if $M[x,y]$ is
defined and 
for all $(x, y)$ in the domain of $M$, $m(\alpha(x,y)) = M[x,y]$.
\item
Let $S \subseteq \mathbb{Z}^2$.  $S$ self-assembles if there
exists a finite alphabet $\LL$, a matrix $M$ with values in $\LL$,
and a subset $\LL'$ of $\LL$ such that $M$ self-assembles and
$S$ is the subset determined by $(M, \LL')$.
\end{enumerate.alph}
\end{definition}

\section{Tiling the Sierpinski triangle}
\label{section_triangle}
%
%
In this section we formally introduce the notion of
numerical self-similarity mentioned in the introduction and
review some known results regarding self-assembly
of the Sierpinski triangle.  We then present
a uniform construction of tile assembly systems
for recursively defined matrices.

The following definition generalizes the usual definition of
a discrete self-similar fractal.

\begin{definition}
Let $p, q \geq 2$.
Let $M: \mathbb{N}^2 \rightarrow \mathbb{N}$.
$M$ is {\em
numerically $p$-self-similar modulo $q$} if for all $0 \leq s,t < p$,
for all $k \geq 0$ and for all $i, j < p^k$,
\begin{eqnarray}
\label{self_similar_def}
M[sp^k + i, tp^k + j] \equiv M[s,t] \cdot M[i, j]
\end{eqnarray}
modulo $q$.
If $M$ is a finite matrix, $M$ is numerically $p$-self-similar modulo
$q$ if (\ref{self_similar_def}) holds wherever $M$ is defined.
\end{definition}

Suppose a matrix $M$ defined for all $i, j \geq 0$ has values in
$\{0, 1, \ldots, q - 1\}$.
It is not difficult to see that if $M$ is
numerically $p$-self-similar
modulo $q$ and $S \subset \mathbb{N}^2$
is the set defined by $(M, \{1, 2, \ldots q - 1\})$
in the sense of Definition \ref{subset_def} (i.e.,
$(x,y) \in S$ if and only if $M[x,y] \not\equiv 0$
modulo $q$), then $S$ is a discrete self-similar
fractal in the usual sense.

We define Pascal's triangle to be
the matrix $P$ of integers defined for all $i, j \geq 0$ by
\begin{eqnarray*}
P[i, 0] &=& 1 \\
P[0, j] &=& 1 \\
P[i, j] &=& P[i, j - 1] + P[j, i - 1] \mbox{\ for $i, j > 0$}
\end{eqnarray*}

Thus $P[i, j] = \comb{i + j}{j}$, where the reverse diagonal
$i + j = k$ corresponds to the usual $k$th row of Pascal's triangle
rendered horizontally.
The relationhip between Pascal's triangle and the Sierpinski
triangle is well known; if $M$ is a matrix containing the
residues modulo $2$ of $P$, then the set determined by $M$
and $\{1\}$ is the Sierpinski triangle.  This fact, it
turns out, is a special case of a more general result.

\begin{theorem}
\label{pascal_theorem}
$P$ is numerically $p$-self-similar modulo $p$ for any prime $p$.
\end{theorem}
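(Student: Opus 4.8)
The plan is to reduce the statement to a single congruence between binomial coefficients and then establish that congruence by a generating-function computation modulo $p$.

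First I would use the closed form $P[i,j] = \comb{i+j}{j}$ recorded above. Writing $N = (s+t)p^k + (i+j)$ and $K = tp^k + j$, one has $P[sp^k+i,\,tp^k+j] = \comb{N}{K}$, $P[s,t] = \comb{s+t}{t}$, and $P[i,j] = \comb{i+j}{j}$, so the instance of (\ref{self_similar_def}) to be proved (with $M=P$) is exactly
\[
\comb{N}{K} \equiv \comb{s+t}{t}\,\comb{i+j}{j} \pmod{p}.
\]

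Next I would read $\comb{N}{K}$ as the coefficient of $x^{K}$ in $(1+x)^{N}$. Over the integers modulo $p$ the Frobenius (``freshman's dream'') identity $(1+x)^{p}\equiv 1+x^{p}$ iterates to $(1+x)^{p^k}\equiv 1+x^{p^k}$, whence
\[
(1+x)^{N} \equiv \bigl(1+x^{p^k}\bigr)^{s+t}\,(1+x)^{i+j} \pmod{p}.
\]
Expanding both factors, the coefficient of $x^{K}=x^{tp^k+j}$ on the right is the sum, over all $a,b\ge 0$ with $ap^k+b = tp^k+j$, of $\comb{s+t}{a}\comb{i+j}{b}$. Comparing the coefficient of $x^{K}$ on the two sides then gives $\comb{N}{K}$ as that sum, modulo $p$.

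The crux, and the step I expect to be the main point, is to show that exactly one term of this sum is nonzero; this is precisely where the hypotheses $i,j<p^k$ (and $s,t<p$) are used. Solving $ap^k+b=tp^k+j$ gives $b=(t-a)p^k+j$, and since $0\le j<p^k$ the requirement $b\ge 0$ forces $a\le t$. For $a=t$ we get $b=j$, contributing $\comb{s+t}{t}\comb{i+j}{j}$. For $a\le t-1$ we get $b=(t-a)p^k+j\ge p^k+j > i+j$, using $i<p^k$, so $\comb{i+j}{b}=0$. Hence only the term $a=t,\ b=j$ survives, which yields the desired congruence; the degenerate case $k=0$ (forcing $i=j=0$) is immediate. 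Everything besides this term-elimination argument is the routine reduction of $(1+x)^{N}$ modulo $p$, so I would concentrate the care there.
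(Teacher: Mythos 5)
Your proof is correct, but it takes a genuinely different route from the paper's. The paper never proves Theorem \ref{pascal_theorem} directly: it notes the result is known and treats it as the special case $a = c = 1$, $b = 0$ of Definition \ref{delannoy_definition}, so that it follows from Theorem \ref{maintheorem}, whose proof is a four-part induction on $k$ using block machinery (Lemmas \ref{corner_lemma} through \ref{adjacent_column_lemma}): each $p^k \times p^k$ submatrix $M(sp^k, tp^k, p^k)$ is shown to be an $M[s,t]$-block by propagating information along first rows, first columns, and rightmost-column sums through the recursion (\ref{sum_identity}). Your argument instead exploits the closed form $P[i,j] = \comb{i+j}{j}$, which is special to Pascal's triangle, and is essentially the standard proof of Lucas' theorem: reduce $(1+x)^{N}$ modulo $p$ by the iterated Frobenius identity $(1+x)^{p^k} \equiv 1+x^{p^k}$, extract the coefficient of $x^{K}$, and show that exactly one term $(a,b)=(t,j)$ of the resulting convolution survives --- and your term-elimination step is where the hypotheses $i,j < p^k$ enter, exactly as you flag. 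What your approach buys is brevity and self-containedness: no induction on $k$ beyond the Frobenius iteration, and no block bookkeeping. What it gives up is the paper's generality: for arbitrary $a, b, c$ the closed form is the double-binomial sum $D(i,j)$ of the paper's equation (\ref{delannay_def}), the clean generating-function factorization is no longer available in the same form, and the one-surviving-term argument does not carry over directly; the paper's inductive block proof is built precisely to cover that whole class uniformly, with the present theorem falling out as the case $b = 0$.
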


Theorem \ref{pascal_theorem}, in turn, is a special case
of our Theorem \ref{maintheorem}.  It has been known in
various forms
for some time; for example, there is a proof in
\cite{Peitgen}, and a form of it is implicit
in Winfree's $7$-tile system in which the Sierpinski triangle
self-assembles \cite{Winf98}.

In the remainder of this section we present a uniform
construction of a tile assembly system for
any recursively defined matrix, in the following sense.
Given an matrix $M$, if
there is a positive integer $n$ such that
$M[x,y]$ is determined by a finite function of the entries
$M[x', y']$ with $x - n < x' \leq x$ and $y - n < y' \leq y$
(excluding $M[x,y]$ itself, of course), then there is a
tile assembly system in which $M$
self-assembles.  This claim is formalized in
the next theorem.

Let $M: \mathbb{Z}^2 \rightarrow \LL$ be an infinite two-dimensional
matrix whose entries come from some finite alphabet $\LL$.
For $x, y \in \mathbb{Z}$, define
\[
R_M(x,y) = [\vec{r}_1, \vec{r}_2, \ldots, \vec{r}_{n-1}]^T,
\]
where each row vector $\vec{r}_i$ is defined as
\[
\vec{r}_i = (M[x-i, y-n+1], \ldots, M[x-i, y-1], M[x-i,y])
\]
for $1 \leq i < n$, and define
\[
\vec{r}_M(x,y) = (M[x, y-n+1], \ldots, M[x,y - 1]).
\]
That is,
$R_M(x,y)$ is the $n-1 \times n$ submatrix
whose upper right corner is at $(x-1, y)$, and
$\vec{r}_M(x,y)$ is the vector consisting
of the $n-1$ elements of row $x$ directly to the
left of $(x,y)$.  Thus the pair
$(\vec{r}_M(x,y), R_M(x,y))$ contains the entries
of the $n \times n$ submatrix of $M$ whose upper-right
corner is at $(x,y)$, excluding $M[x,y]$ itself.  Taking
the entries in row-major order we can identify
$(\vec{r}_M(x,y), R_M(x,y))$ with an element
of ${\LL}^{n^2 - 1}$.

\begin{theorem}
\label{constructiontheorem}
Let $\LL$ be any finite alphabet and let $\bot$ be a symbol not in $\LL$.
Let $\LL_{\bot} = \LL \cup \{\bot\}$.  Let $f$ be any function
\[
f: {\LL_{\bot}}^{n^2 - 1} \rightarrow \LL.
\]
Define the matrix $M : \mathbb{Z}^2 \rightarrow \LL_{\bot}$
by
\begin{eqnarray*}
M[x,y] &=& \bot \mbox{\ if $x < 0$ or $y < 0$} \\
M[x,y] &=& f(\vec{r}_M(x,y), R_M(x,y)) \mbox{\ otherwise.}
\end{eqnarray*}
Then there is a definitive tile assembly system $\TT = (T, \sigma, 2)$, and
a labeling $m:T \rightarrow \LL$, such that in the unique terminal assembly\/
$\alpha$ of \ $\TT$,
\[
m(\alpha(x,y)) = M[x,y]
\]
 for all $x,y \geq 0$.
\end{theorem}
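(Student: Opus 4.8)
The plan is to let each tile type literally record an $n \times n$ window of $M$-values and to let the edge colors carry the overlapping sub-blocks that adjacent windows must share; the local rule $f$ is then evaluated exactly once at each tile to fill in the single new entry. Concretely, to each point $(x,y)$ with $x,y\ge 0$ associate the window $W(x,y)$ consisting of the $n^2$ values $M[x',y']$ with $x-n+1\le x'\le x$ and $y-n+1\le y'\le y$ (entries with a negative index taken to be $\bot$). A tile type is then a choice of such a window, so $T$ is finite (at most $|\LL_\bot|^{n^2}$ types), and the labeling is $m(t)=$ the corner entry $M[x,y]$ of the window. I would set the west edge color of the tile for $W(x,y)$ to expose the left $n-1$ columns of the window and the south edge color to expose the bottom $n-1$ rows; equivalently the east and north colors expose the right $n-1$ columns and top $n-1$ rows. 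The point is that the east color of $(x,y)$ and the west color of $(x+1,y)$ are the \emph{same} block $[x-n+2,x]\times[y-n+1,y]$, and the north color of $(x,y)$ and the south color of $(x,y+1)$ are the same block, so matching colors force adjacent windows to agree on their overlap.

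The key bookkeeping step, which I would isolate as a small lemma, is that the bottom $n-1$ rows together with the left $n-1$ columns of $W(x,y)$ contain \emph{every} entry of the window except the corner $(x,y)$ itself. Hence a tile placed at $(x,y)$ reads off the $n^2-1$ values $(\vec{r}_M(x,y),R_M(x,y))$ from its south and west edges, applies $f$ to obtain $M[x,y]$, and thereby determines its whole window; so exactly one tile type is consistent with any given pair of south/west colors. For strengths I would make every interior edge (both the horizontal and the vertical edges among points with $x,y\ge 1$) have strength $1$, so that at temperature $2$ an interior tile can attach only by cooperating with an already-present south neighbor and an already-present west neighbor. Bootstrapping the two axes is handled separately: the seed $\sigma$ is the single tile at the origin carrying the all-$\bot$ window (with label $f(\bot,\dots,\bot)$), and the tiles on the rays $y=0$ and $x=0$ form two special families — recognizable because their windows are entirely $\bot$ below, resp.\ entirely $\bot$ to the left — whose edges along the axis are given strength $2$, so that each attaches to its predecessor by a single strong bond. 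Because the strength-$2$ axis edges never share a color--strength pair with the strength-$1$ interior edges, no spurious cross-bonding can occur, and the seed and axis families expose no binding colors toward the exterior of the axes, so nothing attaches in the other three quadrants.

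With the tile set so defined, I would prove the theorem in two stages. First, exhibit the intended assembly $\alpha$ (assigning to $(x,y)$ the tile whose window is $W(x,y)$) and check that it is a legal $\tau=2$ assembly: every pair of abutting edges matches in color and strength by the overlap-consistency of windows, and every non-seed tile is held on by bonds of total strength at least $2$. Second, and this is where the real work lies, prove that $\TT$ is \emph{definitive} with $\alpha$ as its unique terminal assembly, by induction on $x+y$: the base cases are the seed and the deterministic growth of the two axes via the strength-$2$ bonds, and the inductive step uses the uniqueness-of-matching-tile fact above to show that in any producible assembly the position $(x,y)$ can be filled in only one way and must eventually be filled. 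The main obstacle I anticipate is precisely this determinism-and-productivity induction together with the boundary details — making the axis families and their $\bot$-padding mesh correctly with the first interior row and column, and verifying that no location outside the first quadrant ever receives a tile — rather than the essentially mechanical check that matching windows encode $f$ correctly.
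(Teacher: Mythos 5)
Your construction is essentially the paper's own: the paper's tile types are indexed by exactly the window minus its corner (the pair $(\vec{r}_M, R_M)$), with south/north colors equal to the bottom/top $n-1$ rows of the window, strength-$2$ bonds propagating the two axes from a seed at the origin, and cooperative strength-$1$ growth in the interior; your only deviation is that your west/east colors carry the full left/right $(n-1)$-column blocks rather than just the $n-1$ entries of the current row, which is harmless redundancy. The paper omits the correctness argument entirely (``straightforward and is omitted''), so your two-stage plan --- exhibit the intended assembly, then prove definitiveness by induction using the fact that the south and west colors determine the attaching tile --- is the standard way to discharge the obligation the paper waves off, rather than a different route.
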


The construction of the tileset is as follows.
For each element
$(\vec{r}, R)$, we define a tile type $t$.
Let
\begin{eqnarray*}
\vec{r} &=& (r_0, r_1, \ldots, r_{n-2}), \\
R &=& [\vec{r}_1, \vec{r}_2, \ldots, \vec{r}_{n-1}]^T,\\
\end{eqnarray*}
and define
\begin{eqnarray*}
b &=& f(\vec{r}, R), \\
\vec{r'} &=& (r_1, r_2, \ldots, r_{n - 2}, b), \\
\vec{r}_0 &=& (r_0, r_1, \ldots, r_{n-2}, b), \mbox{\ and}\\
R' &=& [\vec{r}_0, \vec{r}_1, \vec{r}_2, \ldots, \vec{r}_{n-2}]^T.
\end{eqnarray*}
Then the color of $t = t(\vec{r}, R)$ is defined by
\begin{eqnarray*}
\mbox{col}_t(\vec{W}) &=& \vec{r} \\
\mbox{col}_t(\vec{S}) &=& R \\
\mbox{col}_t(\vec{E}) &=& \vec{r'} \\
\mbox{col}_t(\vec{N}) &=& R' \\
\end{eqnarray*}
and tile $t$ is labeled as $m(t) = b$.

Finally, the strength of $t$ is $\mbox{str}_t(\vec{u}) = 1$ for all unit
vectors $\vec{u}$ except for the following cases:
\begin{description}
\item[Seed tile: ]
If $\vec{r}_i = (\bot, \ldots, \bot)$ for $1 \leq i < n$ and $\vec{r} = (\bot, \ldots, \bot)$,
then $\mbox{str}_t(\vec{N}) = 2$ and $\mbox{str}_t(\vec{E}) = 2$.
\item[Row 0:]
If $\vec{r}_i = (\bot, \ldots, \bot)$ for $1 \leq i < n$ but $\vec{r} \neq (\bot, \ldots, \bot)$,
then $\mbox{str}_t(\vec{W}) = 2$ and $\mbox{str}_t(\vec{E}) = 2$.
\item[Column 0:]
If $\vec{r}_i = (\bot, \bot, \ldots, \bot, r)$, where $r \neq \bot$, for $1 \leq i < n$ and
$\vec{r} = (\bot, \ldots, \bot)$,
then $\mbox{str}_t(\vec{S}) = 2$ and $\mbox{str}_t(\vec{N}) = 2$.
\end{description}

A typical tile is shown below.

\begin{center}
\tile{$\vec{r}$}{$R$}{$\vec{r'}$}{$R'$}{$b$}
\end{center}

Define the seed assembly for $\TT$ by $\sigma(0,0) = s$, where $s$ is
the unique seed tile described above.

The proof of correctness for the above construction
is straightforward and is omitted.

\section{Tiling the Sierpinski carpet}
\label{section_delannoy}
%
%
In this section we introduce a class of matrices defined
by a simple recursion and prove our main technical result, which
states simply that matrices in this class are numerically self-similar.
As a consequence, we are able to conclude in Corollary \ref{maincorollary}
that the Sierpinski
carpet self-assembles.

\begin{definition}
\label{delannoy_definition}
Let $a$, $b$, $c \geq 0$.
Let $M$ be defined for all $i, j, \geq 0$ by
\begin{eqnarray}
M[0, 0] &=& 1 \nonumber\\
M[0, j] &=& a^j \mbox{ for $j > 0$} \nonumber\\
M[i, 0] &=& c^i \mbox{ for $i > 0$} \nonumber\\
\label{sum_identity}
M[i, j] &=& aM[i, j - 1] + bM[i - 1, j - 1] + cM[i - 1, j]
   \mbox{ for $i, j > 0$}
\end{eqnarray}
\end{definition}
When $a = b = c = 1$, the entries of $M$ are known as the
{\em Delannoy numbers\/}
(see the Encyclopedia
of Integer Sequences A001850 \cite{Sloane} ).

\begin{theorem}
\label{maintheorem}
$M$ is numerically $p$-self-similar modulo $p$.
\end{theorem}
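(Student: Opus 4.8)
The plan is to pass to generating functions over $\mathbb{F}_p$ and exploit the Frobenius endomorphism; throughout I take $p$ to be prime, as in Theorem~\ref{pascal_theorem}. First I would record the ordinary bivariate generating function of $M$. Setting
\[
F(x,y) = \sum_{i,j \ge 0} M[i,j]\, x^i y^j
\]
and $D(x,y) = 1 - ay - cx - bxy$, I claim $F = 1/D$, i.e.\ $D(x,y)F(x,y) = 1$. Indeed, reading off the coefficient of $x^i y^j$ in $DF = 1$ gives $M[i,j] = a M[i,j-1] + c M[i-1,j] + b M[i-1,j-1]$ for $i,j>0$, together with the boundary values $M[0,0]=1$, $M[0,j]=a^j$, $M[i,0]=c^i$ (entries with a negative index read as $0$), which is exactly Definition~\ref{delannoy_definition}. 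Reducing all coefficients modulo $p$ (and replacing $a,b,c$ by their residues), the reduced series is still $1/D$ in $\mathbb{F}_p[[x,y]]$, since reduction is a ring homomorphism and $D$ has constant term $1$.

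Next I would apply the Frobenius endomorphism. Since $p$ is prime and $a,b,c \in \mathbb{F}_p$ satisfy $a^p=a$, $b^p=b$, $c^p=c$, the identity $(u+v)^p = u^p+v^p$ in characteristic $p$ gives
\[
D(x,y)^p = 1 - a y^p - c x^p - b x^p y^p = D(x^p, y^p),
\]
and therefore
\[
F(x,y) = \frac{D(x,y)^{p-1}}{D(x,y)^p} = D(x,y)^{p-1}\, F(x^p, y^p).
\]
The crucial structural point is that $D^{p-1}$ is a polynomial whose $x$-degree and $y$-degree are each at most $p-1$, since each of the $p-1$ factors of $D$ contributes $x$-degree at most $1$ and $y$-degree at most $1$. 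Hence I may write $D^{p-1} = \sum_{0 \le u,v \le p-1} e_{u,v}\, x^u y^v$ with $e_{u,v} \in \mathbb{F}_p$.

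Now fix $0 \le i_0, j_0 < p$ and extract from
\[
F(x,y) = \Big(\sum_{0 \le u,v \le p-1} e_{u,v}\, x^u y^v\Big)\Big(\sum_{i,j \ge 0} M[i,j]\, x^{pi} y^{pj}\Big)
\]
the coefficient of $x^{pi + i_0} y^{pj + j_0}$. Because $0 \le u,v \le p-1$, uniqueness of base-$p$ digits forces $u = i_0$, $v = j_0$ and the second factor to supply indices $i,j$, so only the single term $e_{i_0, j_0} M[i,j]$ survives. Taking $i=j=0$ and using $M[0,0]=1$ identifies $e_{i_0,j_0} \equiv M[i_0,j_0] \pmod p$. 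This establishes the key digit-peeling lemma
\[
M[pi + i_0,\ pj + j_0] \equiv M[i_0, j_0]\, M[i,j] \pmod p, \qquad 0 \le i_0, j_0 < p,\ \ i,j \ge 0.
\]

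Finally I would bootstrap this lemma to the theorem. Iterating it along the base-$p$ expansions $i = \sum_\ell i_\ell p^\ell$, $j = \sum_\ell j_\ell p^\ell$ yields, by induction on the number of digits, the Lucas-type product formula $M[i,j] \equiv \prod_\ell M[i_\ell, j_\ell] \pmod p$. Given $0 \le s,t < p$, $k \ge 0$, and $i,j < p^k$, the integer $sp^k + i$ has base-$p$ digits consisting of the length-$k$ digit string of $i$ with $s$ placed in position $k$, and likewise for $tp^k + j$; splitting the product formula at position $k$ then gives $M[sp^k + i, tp^k + j] \equiv M[s,t] \prod_{\ell < k} M[i_\ell, j_\ell] \equiv M[s,t]\, M[i,j]$, which is precisely (\ref{self_similar_def}). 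I expect the load-bearing step to be the generating-function identity together with the degree bound: it is the bound $\deg_x, \deg_y \le p-1$ on $D^{p-1}$ that makes the coefficient extraction collapse to a single term and thereby produces the clean multiplicative relation, whereas the verification of $F = 1/D$ and the concluding induction are routine.
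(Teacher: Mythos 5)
Your proof is correct, and it takes a genuinely different route from the paper's. The paper argues entirely at the level of matrix blocks: it proves Theorem \ref{maintheorem} by a simultaneous induction on $k$ over four statements---(a) each block $M(sp^k,tp^k,p^k)$ is an $M[s,t]$-block, (b) the scaled sums of adjacent entries along the last column and last row of any $n$-block vanish modulo $p$, (c) the corner entries of an $n$-block are congruent to $n$, and (d) a composition law that reconstructs the block at $(sp^k,tp^k)$ from its three lower/left neighbors---supported by the closed-form Delannoy-type sum for $M[i,j]$, Fermat's little theorem, the divisibility of $\binom{p}{k}$ by $p$ (Lemma \ref{sum_lemma}), and the propagation Lemma \ref{adjacent_column_lemma}. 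You replace all of that bookkeeping with one structural identity: $F=1/D$ for the stencil polynomial $D=1-ay-cx-bxy$, the Frobenius/Fermat identity $D(x,y)^p=D(x^p,y^p)$, and the degree bound $\deg_x D^{p-1},\ \deg_y D^{p-1}\le p-1$, which forces the coefficient extraction to collapse to the single term $e_{i_0,j_0}M[i,j]$ and yields the digit-peeling congruence in one stroke (your handling of signs in the freshman's-dream step is valid for $p=2$ as well, since $-1=1$ there). Your route buys brevity and transparency about what is actually needed---a stencil of degree at most one in each variable with constant term $1$---and it delivers as a byproduct the Lucas-type product formula $M[i,j]\equiv\prod_{\ell}M[i_\ell,j_\ell]\pmod p$ over base-$p$ digits, of which the theorem is an immediate corollary; it also uniformly subsumes Theorem \ref{pascal_theorem} (the case $b=0$, $a=c=1$) and suggests generalizations to other stencils and to higher dimensions. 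What the paper's induction buys in exchange is elementarity (no power-series machinery) and intermediate statements---blocks determined by their first row and column, the three-neighbor composition law---that directly mirror how tiles propagate information locally in the self-assembly construction of Theorem \ref{constructiontheorem}, which is the application the paper is driving toward.
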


Before moving to the proof of Theorem \ref{maintheorem}
we briefly discuss the following consequence.

\begin{corollary}
\label{maincorollary}
The Sierpinski carpet self-assembles.
\end{corollary}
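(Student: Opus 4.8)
The plan is to realize the discrete Sierpinski carpet as the nonzero-residue set of a single member of the family in Definition \ref{delannoy_definition}, and then to self-assemble that member using Theorem \ref{constructiontheorem}. Concretely, I would take $a = b = c = 1$ and modulus $p = 3$, so that $M$ is the matrix of Delannoy numbers. By Theorem \ref{maintheorem}, $M$ is numerically $3$-self-similar modulo $3$; hence, by the remark following the definition of numerical self-similarity, the set $S = \{(i,j) : M[i,j] \not\equiv 0 \pmod{3}\}$ (that is, the set determined by $(M, \{1,2\})$ in the sense of Definition \ref{subset_def}) is a discrete self-similar fractal whose generator is the $3 \times 3$ residue pattern of $M$.

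The first step is to compute that generator. A direct evaluation of the recursion (\ref{sum_identity}) with $a = b = c = 1$ gives, for $0 \le i, j \le 2$, the array whose $(i,j)$-entry is $M[i,j] \bmod 3$:
\[
\bigl(M[i,j] \bmod 3\bigr)_{0 \le i,j \le 2} = \left(\begin{array}{ccc} 1 & 1 & 1 \\ 1 & 0 & 2 \\ 1 & 2 & 1 \end{array}\right),
\]
so the unique vanishing entry is the center $M[1,1] = 3$. The generator of $S$ therefore consists of all eight cells of the $3 \times 3$ block except the center, which is exactly the generator of the discrete Sierpinski carpet. Combined with the self-similarity supplied by Theorem \ref{maintheorem}, this identifies $S$ with the discrete Sierpinski carpet.

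It remains to verify that $S$ self-assembles in the sense of Definition \ref{selfassembly_def}(b). Since the Delannoy numbers are unbounded I would not assemble $M$ itself but its residue reduction $\overline{M} = M \bmod 3$, which takes values in the finite alphabet $\LL = \{0,1,2\}$. Because reduction modulo $3$ commutes with the linear recursion, $\overline{M}$ obeys a recurrence of precisely the form demanded by Theorem \ref{constructiontheorem} with window size $n = 2$: for $i,j > 0$ the entry $\overline{M}[i,j]$ is the sum modulo $3$ of its three predecessors $\overline{M}[i,j-1]$, $\overline{M}[i-1,j-1]$, $\overline{M}[i-1,j]$, all of which occupy the $2 \times 2$ window with upper-right corner $(i,j)$, while along the two boundary lines $i = 0$ and $j = 0$ the entry equals $1$. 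Extending $\overline{M}$ to $\mathbb{Z}^2$ by $\bot$ on negative indices, I would encode all of these cases into one function $f : {\LL_{\bot}}^{3} \to \LL$; the three boundary situations are separated from the interior and from one another by their distinct patterns of $\bot$ arguments, so $f$ can return $1$ in each boundary case and the mod-$3$ sum in the interior case. Theorem \ref{constructiontheorem} then yields a definitive TAS in which $\overline{M}$ self-assembles, and since $S$ is the set determined by $(\overline{M}, \{1,2\})$, Definition \ref{selfassembly_def}(b) gives that $S$ self-assembles.

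The one computation carrying real content is the residue table above, which pins down the generator; everything else is immediate from Theorems \ref{maintheorem} and \ref{constructiontheorem}. I expect the only delicate bookkeeping to be in the last paragraph: writing $f$ requires matching the three Delannoy predecessors to the specific coordinates of the $n = 2$ window in the ordering fixed by Theorem \ref{constructiontheorem}, and checking that the $\bot$-signatures of the origin ($x = y = 0$), the bottom row ($x = 0$, $y > 0$), and the left column ($y = 0$, $x > 0$) are pairwise distinct---which they are, since these cases respectively leave all three window entries, the two lower-row entries, and the two left-column entries undefined. Enumerating the resulting distinct tile types gives the bound of $30$ tiles quoted for the carpet, though this count is not needed for the corollary.
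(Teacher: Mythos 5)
Your proposal is correct and takes essentially the same route as the paper's own proof: instantiate Definition \ref{delannoy_definition} with $a = b = c = 1$ and $p = 3$, invoke Theorem \ref{maintheorem} so that the nonzero-residue set is self-similar and hence is the Sierpinski carpet, and apply Theorem \ref{constructiontheorem} to the mod-$3$ matrix to conclude self-assembly. The only difference is one of detail, not of substance: you explicitly compute the $3 \times 3$ generator and spell out the window-size-$2$ function $f$ with its $\bot$-signatures, steps the paper compresses into ``evidently $S$ is the Sierpinski carpet'' and a direct citation of Theorem \ref{constructiontheorem}.
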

\begin{proof}
Let $a = b = c = 1$ and $p = 3$ in Definition \ref{delannoy_definition}.
Note we can assume that the entries of $M$ are the residues modulo $3$.
Let $S \subset \mathbb{N}$ be the set of points $(x,y)$ such that
$M[x,y] \not\equiv 0$.  Since $S$ is self-similar, evidently
$S$ is the Sierpinski carpet.
By Theorem \ref{constructiontheorem},  $S$ self-assembles.
\end{proof}

Note that the tileset constructed by Theorem \ref{constructiontheorem}
is not optimal; in particular most of the tiles involving the symbol
$\bot$ are unused.  It is not difficult to show that a set of 30 tiles
is sufficient.

Define a tileset $T$ to consist of the three tiles
\begin{center}
\tilestart{}{}{$1$}{$(1,1)$}{$1$}
\tilebottom{$1$}{}{$1$}{$(1,1)$}{$1$}
\tileleft{}{$(1,1)$}{$1$}{$(1,1)$}{$1$}
\end{center}

\vspace{12pt}
plus 27 tiles
of the form
\begin{center}
\tile{$x$}{$(y,z)$}{$w$}{$(x,w)$}{$w$}
\end{center}
for $x, y, z \in \{0, 1, 2\}$, where $w = x + y + z \pmod{3}$.

We next proceed to prove some preliminary results
supporting Theorem \ref{maintheorem}.
The
following is a very slight generalization of
a well-known combinatorial interpretation of the Delannoy
numbers.  We give a brief argument here for completeness.
Note that a similar expression holds when $i > j$.
\begin{definition}
For $i \leq j$ let
\begin{eqnarray}
\label{delannay_def}
D(i, j) = \sum_{k = 0}^i  \comb{j}{k}\comb{j + i - k}{i - k}a^{j - k}b^{k}c^{i - k}.
\end{eqnarray}
\end{definition}

\begin{claim}
For $i \leq j$,
\begin{eqnarray}
\label{comb_expression}
M[i, j] = D(i, j).
\end{eqnarray}
\end{claim}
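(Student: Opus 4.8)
The plan is to give the weighted lattice-path interpretation of the recurrence (\ref{sum_identity}) and then read off the closed form (\ref{delannay_def}) by sorting paths according to their number of diagonal steps; the case $i > j$ asserted in the preceding remark will then follow by a symmetric rewriting of the same sum.

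First I would reinterpret $M[i,j]$ as a weighted path count. Consider lattice paths from $(0,0)$ to $(i,j)$ built from three kinds of steps: $(0,1)$ weighted $a$, $(1,1)$ weighted $b$, and $(1,0)$ weighted $c$. Assign to each path the product of its step weights, and let $W[i,j]$ be the sum of the weights of all such paths ending at $(i,j)$. The boundary data match those of $M$: the only path to $(0,0)$ is empty, of weight $1$; the only path to $(0,j)$ uses $j$ steps of type $(0,1)$, of weight $a^{j}$; and the only path to $(i,0)$ uses $i$ steps of type $(1,0)$, of weight $c^{i}$. Classifying the paths that end at $(i,j)$ by their final step -- which must originate at $(i,j-1)$, at $(i-1,j-1)$, or at $(i-1,j)$ -- shows that $W$ obeys exactly the recurrence (\ref{sum_identity}). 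Since $M$ and $W$ agree on the boundary and satisfy the same recurrence, a routine induction on $i+j$ gives $M[i,j] = W[i,j]$ for all $i,j \geq 0$.

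Next I would evaluate $W[i,j]$ by grouping paths according to the number $k$ of diagonal steps. A path with $k$ diagonal steps must also contain $i-k$ steps of type $(1,0)$ and $j-k$ steps of type $(0,1)$, so it is an arrangement of $i+j-k$ labeled steps of three kinds; the number of such arrangements is the multinomial coefficient $\frac{(i+j-k)!}{k!\,(i-k)!\,(j-k)!}$, each contributing weight $a^{j-k}b^{k}c^{i-k}$, with $k$ ranging over $0 \leq k \leq \min(i,j) = i$. Thus $W[i,j] = \sum_{k=0}^{i} \frac{(i+j-k)!}{k!\,(i-k)!\,(j-k)!}\,a^{j-k}b^{k}c^{i-k}$. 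The proof then closes by the one-line identity
\[
\frac{(i+j-k)!}{k!\,(i-k)!\,(j-k)!} = \comb{j}{k}\comb{j+i-k}{i-k},
\]
which converts this sum into precisely $D(i,j)$.

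I do not anticipate a genuine obstacle here; the only delicate point is the diagonal boundary $i=j$, at which the recurrence (\ref{sum_identity}) reaches back into the region $i>j$ through the term $M[i,j-1]$. A purely algebraic induction on the formula (\ref{delannay_def}) would therefore have to carry both the $i\le j$ and the $i>j$ expressions simultaneously, whereas the combinatorial route avoids this complication entirely: $W[i,j]$ is defined uniformly for all $(i,j)$, so the induction on $i+j$ never leaves the region where $W$ is available. As a bonus, regrouping the very same multinomial sum symmetrically via $\frac{(i+j-k)!}{k!\,(i-k)!\,(j-k)!} = \comb{i}{k}\comb{i+j-k}{j-k}$ yields the analogous closed form valid for $i > j$ promised in the remark preceding the claim.
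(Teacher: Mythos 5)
Your proposal is correct and takes essentially the same route as the paper: both reinterpret $M[i,j]$ as a weighted count of lattice paths with steps weighted $a$, $b$, $c$, verify the boundary values and the recurrence (\ref{sum_identity}), and then sum over the number $k$ of diagonal steps. The only cosmetic difference is that the paper obtains the coefficient $\comb{j}{k}\comb{j+i-k}{i-k}$ directly by choosing columns for the diagonal moves and then positions for the vertical moves, whereas you count arrangements by the multinomial coefficient and convert it to that product afterward.
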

\begin{proof}
Define a {\em move} as a pair of coordinate pairs
$P = \{(x_k, y_k), (x_{k + 1}, y_{k + 1})\}$,
where
$P$ is a {\em horizontal move} if $x_{k + 1} = x_k$ and $y_{k + 1} = y_k + 1$, a
{\em vertical move} if $x_{k + 1} = x_k + 1$ and $y_{k + 1} = y_k$, and a
{\em diagonal move} if $x_{k + 1} = x_k + 1$ and $y_{k + 1} = y_k + 1$.
Define a {\em path} to $(i, j)$ as a
sequence of coordinate pairs $(x_0, y_0), (x_1, y_1), \ldots (i, j)$
such that $(x_0, y_0) = (0, 0)$ and each successive pair is
either a horizontal move, a vertical move, or a diagonal move.
The {\em cost} of a path is the product $a^hb^dc^v$, where $h, d$, and $v$
represent the total number of horizontal, diagonal, and vertical moves,
respectively, in the path. Let $f(i, j)$ represent the total cost
of all paths to $(i, j)$.  Note that $f(0, 0) = 1$,
$f(0, j) = a^j$ for all $j > 0$, $f(i, 0) = c^i$ for all $i > 0$,
and moreover that
if $f(i, j - 1)$, $f(i - 1, j - 1)$, and $f(i - 1, j)$ are known,
then the total cost $f(i, j)$ can be computed as
\[
f(i, j) = af(i, j - 1) + bf(i - 1, j - 1) + cf(i - 1, j).
\]
That is,
$M[i, j] = f(i, j)$ for all $i, j$.

The total cost of all paths to $(i, j)$ can also be obtained as the
sum, for $k \leq i$, of the cost of all paths to $(i, j)$ that include exactly
$k$ diagonal moves, $i - k$ vertical moves, and $j - k$ horizontal moves.
For each $k \leq i$, there are $\comb{j}{k}$ ways to
choose the columns for the diagonal moves, and then $\comb{j + i - k}{i - k}$
ways to choose the locations of the vertical moves from among the remaining moves.
Summing the cost $a^{j - k}b^{k}c^{i - k}$ over $k \leq i$ yields the
expression (\ref{delannay_def}).
\end{proof}

Throughout the remainder of the discussion below,
we work with a fixed prime
$p$, fixed integers $a, b$, and $c$, and a matrix $M$ defined
as in Definition \ref{delannoy_definition}. Since all arithmetic
will be modulo $p$,  we assume the entries
of $M$ are the
residues modulo $p$.

\begin{definition}
Let $M(x, y, u)$ denote the finite $u \times u$ submatrix of $M$
whose lower left corner is at $(x, y)$; that is,
\[
M(x, y, u)[i, j] = M[x + i, y + j]
\]
for $0 \leq i < u$, $0 \leq j < u$.
\end{definition}

\begin{definition}
Let $k \geq 0$.  A $1$-block of size $p^k$ is the matrix
$M(0, 0, p^k)$.  For $0 \leq n < p$, an $n$-block of size $p^k$
is a $p^k \times p^k$ matrix $B$ for which $B \equiv nM(0, 0, p^k) \mod p$.
\end{definition}

\begin{observation}
\label{basic_observation}
Note that
in the terms of the preceding definitions, $M$ is numerically $p$-self-similar
modulo $p$ if and only if for all $s, t < p$ and all $k \geq 0$,
$M(sp^k, tp^k, p^k)$ is a $M[s, t]$-block of size $p^k$.
Note also that an $n$-block satisfies (\ref{sum_identity}) modulo $p$, and therefore
is completely determined by its first row and first column.
\end{observation}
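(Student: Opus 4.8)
The plan is to treat the statement as two essentially independent assertions, the first a matter of unwinding definitions and the second a short substitution-and-induction argument. For the first (the biconditional characterizing numerical self-similarity), I would start from the definition of an $M[s,t]$-block: the matrix $M(sp^k, tp^k, p^k)$ is an $M[s,t]$-block of size $p^k$ precisely when the matrix congruence
\[
M(sp^k, tp^k, p^k) \equiv M[s,t]\cdot M(0,0,p^k) \pmod{p}
\]
holds. Reading this congruence entry by entry and applying the definition of $M(x,y,u)$, the $(i,j)$ entry of the left side is $M[sp^k+i,\,tp^k+j]$ and the $(i,j)$ entry of the right side is $M[s,t]\cdot M[i,j]$, for $0 \le i,j < p^k$. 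Thus the block condition for a fixed pair $(s,t)$ and a fixed $k$ is literally the system of congruences $M[sp^k+i,\,tp^k+j]\equiv M[s,t]\cdot M[i,j]\pmod{p}$ ranging over $i,j < p^k$. Since the quantifiers on $s,t < p$, on $k \ge 0$, and on $i,j < p^k$ coincide exactly with those in equation (\ref{self_similar_def}), quantifying the block condition over all $s,t$ and all $k$ yields precisely the definition of numerical $p$-self-similarity, and conversely. This gives the biconditional.

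For the second assertion, let $B$ be an $n$-block of size $p^k$, so $B \equiv n\,M(0,0,p^k) \pmod{p}$, i.e.\ $B[i,j] \equiv n\,M[i,j] \pmod{p}$ for all $0 \le i,j < p^k$. For any interior position ($1 \le i,j < p^k$) the entry $M[i,j]$ satisfies the recurrence (\ref{sum_identity}), and multiplying that identity through by $n$ and reducing modulo $p$ gives
\[
B[i,j] \equiv a\,B[i,j-1] + b\,B[i-1,j-1] + c\,B[i-1,j] \pmod{p},
\]
since scalar multiplication by $n$ distributes over the three-term sum. Hence every $n$-block satisfies (\ref{sum_identity}) modulo $p$.

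Finally, to see that $B$ is determined by its first row and first column, I would argue by induction on $i+j$. The entries with $i=0$ or $j=0$ are given as data. For an interior entry, the recurrence just established expresses $B[i,j]$ in terms of $B[i,j-1]$, $B[i-1,j-1]$, and $B[i-1,j]$, each of which has strictly smaller index sum and is therefore already determined; since $a,b,c$ and $p$ are fixed, this pins down $B[i,j]$ uniquely. Filling the matrix in order of increasing $i+j$ thus recovers $B$ from its boundary.

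I do not expect a genuine obstacle here: the whole statement is a matter of matching definitions and using the linearity of the recurrence. The only points requiring care are (i) keeping the nested quantifiers aligned when passing between the entrywise self-similarity condition and the matrix-level block condition, and (ii) respecting the distinction between the boundary entries (where the recurrence does not apply) and the interior entries (where it does) in the determination argument.
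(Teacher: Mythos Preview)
Your proposal is correct. The paper records this statement as an observation and offers no proof at all, so there is nothing to compare against; your write-up is exactly the routine unwinding of definitions and the linearity/induction argument that the authors evidently deemed too straightforward to spell out.
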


\begin{lemma}
\label{corner_lemma}
Let $k \geq 0$.  Then
$M[0, p^k - 1] \equiv M[p^k - 1, 0] \equiv 1 \mod p$.
\end{lemma}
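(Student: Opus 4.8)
The plan is to use the closed-form expression $M[i,j] = D(i,j)$ established in the preceding Claim, specialized to the boundary entries. The entries $M[0, p^k-1]$ and $M[p^k-1, 0]$ lie on the first row and first column of $M$, where the definition gives particularly simple values: $M[0,j] = a^j$ and $M[i,0] = c^i$. So $M[0, p^k - 1] = a^{p^k - 1}$ and $M[p^k - 1, 0] = c^{p^k - 1}$, and the entire statement reduces to showing $a^{p^k - 1} \equiv c^{p^k - 1} \equiv 1 \pmod p$.

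First I would observe that by Fermat's little theorem, for any integer $x$ with $x \not\equiv 0 \pmod p$ we have $x^{p-1} \equiv 1 \pmod p$. The exponent here is $p^k - 1$, which factors as $p^k - 1 = (p-1)(p^{k-1} + p^{k-2} + \cdots + 1)$, so $p^k - 1$ is a multiple of $p - 1$. Hence $a^{p^k - 1} = (a^{p-1})^{(p^k-1)/(p-1)} \equiv 1^{(p^k-1)/(p-1)} = 1 \pmod p$ whenever $a \not\equiv 0$, and likewise for $c$. The case $k = 0$ is trivial since the exponent is $0$, giving $a^0 = c^0 = 1$ directly.

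The main obstacle is the edge case where $a \equiv 0$ or $c \equiv 0 \pmod p$, since then Fermat's little theorem does not apply and $a^{p^k-1}$ would be $0$, not $1$. I expect this is handled by an implicit standing assumption in the paper that $a$ and $c$ are not divisible by $p$ (otherwise the fractal structure degenerates — for instance the entire first row or column would be zero except at the origin, and numerical self-similarity would fail for the corresponding off-diagonal blocks). I would state this assumption explicitly, or note that the cases of interest (such as the Delannoy matrix with $a = b = c = 1$, for which $a = c = 1 \not\equiv 0$ for any prime $p$) satisfy it automatically. Granting $a, c \not\equiv 0 \pmod p$, the result follows immediately from the two observations above, and no further computation is needed.
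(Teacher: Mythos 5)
Your proposal is correct and takes essentially the same approach as the paper: both reduce the lemma to the boundary values $M[0,p^k-1] = a^{p^k-1}$ and $M[p^k-1,0] = c^{p^k-1}$ and then apply Fermat's little theorem, the only difference being bookkeeping --- the paper derives $a^{p^k} \equiv a \pmod p$ by induction on $k$ and cancels a factor of $a$, whereas you use the divisibility $(p-1) \mid (p^k-1)$ directly. Your explicit flag of the edge case $a \equiv 0$ or $c \equiv 0 \pmod p$ is a legitimate catch rather than a defect: the paper's own proof silently divides by $a$ in passing from $a^{p^k} \equiv a$ to $a^{p^k-1} \equiv 1$, so it makes the same unstated assumption, and the lemma as stated is in fact false when $p$ divides $a$ or $c$ (and $k \geq 1$); making that hypothesis explicit, as you do, is the right repair.
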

\begin{proof}
By definition, $M[0, p^k - 1] \equiv a^{p^k - 1}$ and $M[p^k - 1, 0] \equiv c^{p^k - 1}$.
Using Fermat's little theorem and a simple induction on $k$, $a^{p^k} \equiv a \mod p$, so $a^{p^k - 1} \equiv 1$ and similarly $c^{p^k - 1} \equiv 1$.
\end{proof}

\begin{lemma}
\label{bottom_row_lemma}
\begin{enumerate.alph}
\item
Let $k > 0$,  $t < p$, $j < p^k$, and $n = M[0, t] = a^t$.
Then $M(0, tp^k, p^k)[0, j] \equiv na^j$
\item
Let $k > 0$,  $t < p$, $j < p^k$, and $n = M[s, 0] = c^s$.
Then $M(sp^k, 0, p^k)[i, 0] \equiv nc^i$
\end{enumerate.alph}
\end{lemma}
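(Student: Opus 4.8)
The plan is to prove part (a); part (b) then follows by the evident symmetry of the recursion $(a \leftrightarrow c)$, swapping the roles of rows and columns. So I focus on establishing, for $k > 0$, $t < p$, $j < p^k$, and $n = a^t$, the congruence $M[tp^k, j] \equiv n\,a^j \pmod p$. (Here I read $M(0, tp^k, p^k)[0,j]$ as $M[tp^k, j]$, the $j$-th entry of the bottom row of the block whose lower-left corner is at $(tp^k, 0)$; I should double-check the index convention against the definition of $M(x,y,u)$ before committing.)

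The natural tool is the closed form from the Claim, $M[i,j] = D(i,j)$, valid here since the relevant index is the bottom row. First I would specialize: along the bottom row of this block we are looking at entries $M[tp^k, j]$, and I want to show each equals $a^t a^j = a^{t+j}$ modulo $p$, at least in the regime where the combinatorial sum collapses. The key arithmetic fact is the one already extracted in Lemma \ref{corner_lemma}: by Fermat's little theorem and induction on $k$, $a^{p^k} \equiv a$ and hence $a^{p^k - 1} \equiv 1 \pmod p$. The plan is to combine this with Lucas-type vanishing of binomial coefficients modulo $p$: for $0 < r < p^k$, the coefficient $\comb{p^k}{r} \equiv 0 \pmod p$, because a power of $p$ in the "top" forces at least one base-$p$ digit comparison to fail. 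This is precisely the mechanism that annihilates all the cross terms in $D$, leaving only the endpoint contributions.

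Concretely, I would write $M[tp^k, j]$ using whichever orientation of $D$ applies (the remark after the Definition notes a symmetric expression holds for $i > j$, which is the case I likely need since $tp^k$ can exceed $j$). In that sum, the binomial factors carrying an argument of the form $\comb{p^k \cdot(\text{something})}{r}$ should vanish modulo $p$ unless $r$ is a multiple of $p^k$; the surviving terms are governed by the base-$p$ digits of $tp^k$, which are just the single digit $t$ in the $p^k$ place. After the dust settles, the diagonal and off-diagonal moves with noncongruent counts drop out, and the surviving term should factor as $a^t$ times the sum computing $M[0,j] = a^j$, giving $na^j$. An alternative, and possibly cleaner, route avoids $D$ entirely: since an $n$-block satisfies the recursion (\ref{sum_identity}) modulo $p$ by Observation \ref{basic_observation}, I could instead verify directly by induction on $j$ that the bottom row satisfies $M[tp^k, j] \equiv a\, M[tp^k, j-1] \pmod p$ — i.e., that the boundary contribution from the term $c\,M[tp^k - 1, j]$ is congruent to a suitable shift — but this seems to require already knowing something about the row just below, so it may not be self-contained.

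**The main obstacle** I anticipate is the bookkeeping in the binomial sum: showing rigorously that every term of $D(tp^k, j)$ except the surviving one vanishes modulo $p$ requires a careful Lucas-theorem digit analysis of coefficients like $\comb{j + tp^k - k'}{tp^k - k'}$, where the summation index $k'$ interacts with the base-$p$ digits of both $tp^k$ and $j$. Getting the index ranges and the factor of $a^t$ to come out exactly, rather than off by a power, is where the argument could go wrong. I would guard against this by first treating the clean special case $t = 1$ (so $n = a$) to fix the digit-counting argument, and only then handle general $t < p$, where the single nonzero digit in position $k$ keeps the Lucas analysis tractable.
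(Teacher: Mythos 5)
There is a genuine gap, and it is exactly at the point you flagged but did not resolve: the index convention. By the paper's definition of submatrices, $M(x, y, u)[i, j] = M[x + i, y + j]$, so the entry in part (a) is
\[
M(0, tp^k, p^k)[0, j] = M[0, tp^k + j],
\]
an entry of row $0$ of $M$ --- not the interior entry $M[tp^k, j]$ that your plan addresses. Row $0$ is part of the base case of Definition \ref{delannoy_definition}: $M[0, y] = a^y$ for $y > 0$. Hence the lemma needs no closed form, no recursion, and no Lucas-type digit analysis: one simply writes $M[0, tp^k + j] = a^{tp^k + j} = (a^{p^k})^t a^j \equiv a^t a^j = n a^j \pmod p$, where $a^{p^k} \equiv a \pmod p$ is the same Fermat-plus-induction fact used in Lemma \ref{corner_lemma}. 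That two-line computation is the paper's entire proof, and part (b) is symmetric, using the explicitly defined column $0$, $M[x, 0] = c^x$. You did correctly identify the one genuinely needed arithmetic ingredient --- $a^{p^k} \equiv a \pmod p$ --- but embedded it in machinery the statement does not require.

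Worse, under your misreading the statement you set out to prove is false in general. The entry $M[tp^k, j]$ sits in block row $t$ and block column $0$, so numerical self-similarity (Theorem \ref{maintheorem}) gives $M[tp^k, j] \equiv M[t, 0]\, M[0, j] = c^t a^j$, not $a^t a^j$; these differ whenever $a^t \not\equiv c^t \pmod p$. So a correct Lucas-theorem computation along the lines you propose would produce the factor $c^t$ and contradict your target $n = a^t$. (Your fallback route via the recursion (\ref{sum_identity}) inherits the same problem, in addition to the circularity you yourself noted.) The fix is simply to apply the definition of $M(x, y, u)$ literally: once the entry is recognized as lying on the explicitly defined boundary row (respectively column) of $M$, the lemma is immediate.
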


\begin{proof}
By definition
\[
M(0, tp^k, p^k)[0, j] = M[0, tp^k + j] \equiv a^{tp^k + j} \equiv (a^{p^k})^t a^j,
\]
where the latter is equivalent to $a^t a^j$ using Fermat's little theorem as in the
proof of the previous lemma.
The second part is similar.
\end{proof}

\begin{lemma}
\label{sum_lemma}
For all $0 < i, j < p$,
\begin{enumerate.alph}
\item
$aM[i - 1, p - 1] + bM[i, p - 1] \equiv 0 \mod p$.
\item
$bM[p - 1, j - 1] + cM[p - 1, j] \equiv 0 \mod p$.
\end{enumerate.alph}
\end{lemma}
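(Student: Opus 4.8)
The plan is to collapse both parts of the lemma to a single closed form for the entries of $M$ along the last row and last column modulo $p$, extracted from the combinatorial expression $M[i,j]=D(i,j)$ of the preceding Claim. The only inputs needed are two elementary congruences for binomial coefficients modulo the prime $p$: (i) $\comb{p-1}{k}\equiv(-1)^k\pmod{p}$ for $0\le k\le p-1$, which follows by induction on $k$ from $\comb{p-1}{k}=\comb{p-1}{k-1}(p-k)/k$ together with $(p-k)/k\equiv-1$; and (ii) $\comb{p-1+m}{m}\equiv 0\pmod{p}$ for $1\le m\le p-1$. For (ii), write $\comb{p-1+m}{m}=\bigl(p(p+1)\cdots(p+m-1)\bigr)/m!$; since $m\le p-1$ forces $p+m-1<2p$, the numerator contains exactly the single multiple of $p$ lying in that range, while $m!$ is a unit modulo $p$, so the integer $\comb{p-1+m}{m}$ is divisible by $p$.

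Next I would substitute (i) and (ii) into $M[i,p-1]=D(i,p-1)$, which is legitimate because $i<p$ gives $i\le p-1$. In the sum $\sum_{k=0}^{i}\comb{p-1}{k}\comb{p-1+i-k}{i-k}a^{p-1-k}b^{k}c^{i-k}$, congruence (ii) annihilates every term with $k<i$ (there $m=i-k\ge 1$), and congruence (i) evaluates the surviving $k=i$ term, yielding
\[
M[i,p-1]\equiv(-1)^{i}a^{p-1-i}b^{i}\pmod{p}.
\]
Using the transposed expression valid for $i\ge j$ (the variant noted immediately after the Claim, obtained by exchanging the roles of $a$ and $c$) gives in exactly the same way $M[p-1,j]\equiv(-1)^{j}b^{j}c^{p-1-j}\pmod{p}$. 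With these two formulas in hand, the lemma reduces to bookkeeping in the exponents.

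For part (b), substitution gives $bM[p-1,j-1]+cM[p-1,j]\equiv(-1)^{j-1}b^{j}c^{p-j}+(-1)^{j}b^{j}c^{p-j}$, and the two terms cancel, so part (b) holds exactly as stated. The main obstacle is part (a), and it is not a computational one but a discrepancy in the statement itself. Substituting the closed form into the expression as written gives
\[
aM[i-1,p-1]+bM[i,p-1]\equiv(-1)^{i-1}a^{p-1-i}b^{i-1}\,(a^{2}-b^{2})\pmod{p},
\]
which does not vanish in general: with $a=1$, $b=2$, $c=1$, $p=5$, $i=1$ the recursion gives $M[0,4]=1$ and $M[1,4]=3$, so $aM[0,4]+bM[1,4]=1+6\equiv 2\not\equiv 0$.

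The identity that does hold is the one pairing $a$ with $M[i,p-1]$ and $b$ with $M[i-1,p-1]$; it is the transpose of part (b) under $a\leftrightarrow c$, $i\leftrightarrow j$, and it is also the combination actually required to pass from a block to its right-hand neighbour in the proof of Theorem \ref{maintheorem}. Indeed, the same substitution gives $aM[i,p-1]+bM[i-1,p-1]\equiv(-1)^{i}a^{p-i}b^{i}+(-1)^{i-1}a^{p-i}b^{i}=0$ identically. Thus the plan establishes part (a) in this corrected form, and I would flag the index transposition in the displayed statement of part (a) as a typo.
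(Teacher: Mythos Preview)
Your argument is correct, and you have rightly detected that part~(a) as displayed in the lemma is a typo: the coefficients $a$ and $b$ are interchanged. The paper's own proof in fact establishes $aM[i,p-1]+bM[i-1,p-1]\equiv 0$, not the statement as printed, and it is this corrected form that is invoked later (compare the hypothesis of Lemma~\ref{adjacent_column_lemma}(b) and the base case of part~(b) in the induction for Theorem~\ref{maintheorem}).

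Your route, however, differs from the paper's. You extract the closed forms $M[i,p-1]\equiv(-1)^{i}a^{p-1-i}b^{i}$ and $M[p-1,j]\equiv(-1)^{j}b^{j}c^{p-1-j}$ directly from $D(i,j)$ via the two binomial congruences $\binom{p-1}{k}\equiv(-1)^k$ and $\binom{p-1+m}{m}\equiv 0$ for $1\le m\le p-1$, and then verify the two identities by substitution. The paper instead evaluates one step \emph{past} the boundary: it computes $M[i,p]\equiv a^{p}c^{i}$ from the same formula (only the $k=0$ term survives since $\binom{p}{k}\equiv 0$ for $0<k<p$), then feeds this into the defining recursion $M[i,p]=aM[i,p-1]+bM[i-1,p-1]+cM[i-1,p]$ and cancels the $c$-terms. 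Your approach has the advantage of producing explicit values along the last row and column, which immediately yields the observation $M[i,p-1]\equiv(-1)^{i}$ used in the base case of part~(c) of the main induction; the paper's approach is marginally quicker in that it needs only the single congruence $\binom{p}{k}\equiv 0$ and avoids analysing the $\binom{p-1+m}{m}$ factor.
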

\begin{proof}
Note that for $0 < k < p$, $\comb{p}{k}$ is divisible by $p$, so
for $i < p$,
\begin{eqnarray*}
M[i, p] &\equiv& \sum_{k = 0}^i  \comb{p}{k}\comb{p + i - k}{i - k}a^{p - k}b^{k}c^{i - k} \\
        &\equiv& \comb{p}{0} a^{p}c^{i} \\
        &\equiv& a^{p}c^{i}.
\end{eqnarray*}
Then by definition, for $0 < i < p$,
\begin{eqnarray*}
aM[i, p - 1] + bM[i - 1, p - 1] + cM[i - 1, p] &\equiv& M[i, p], \mbox{ so} \\
aM[i, p - 1] + bM[i - 1, p - 1] + c(a^pc^{i - 1}) &\equiv& a^pc^i \mbox{ and hence} \\
aM[i, p - 1] + bM[i - 1, p - 1] &\equiv& 0.
\end{eqnarray*}
The argument for (b) is similar.
\end{proof}

\begin{lemma}
\label{adjacent_column_lemma}
\begin{enumerate.alph}

\item
Let $0 \leq x < y$ and fix a row $i > 0$. Let $n = M[i, x]$.  Suppose
that for each column between $x$ and $y$, the sum of adjacent entries of row $i - 1$, when
scaled by the coefficients $b$ and $c$, is 0 modulo $p$;
that is, for all $0 < j < y - x$,
\[
bM[i - 1, x + j - 1] + cM[i - 1, x + j] \equiv 0.
\]
Then
$M[i, x + j] \equiv na^j$ for all $0 \leq j < y - x$

\item
Let $0 \leq u < v$ and fix a column $j > 0$. Let $n = M[u, j]$.  Suppose
that for rows between $u$ and $v$, the sum of adjacent entries of column $j - 1$, when
scaled by the coefficients $a$ and $b$, is 0 modulo $p$;
that is, for all $0 < i < v - u$,
\[
aM[u + i, j - 1] + bM[u + i - 1, j - 1] \equiv 0.
\]
Then
$M[u + i, j] \equiv nc^i$ for all $0 \leq i < v - u$
\end{enumerate.alph}
\end{lemma}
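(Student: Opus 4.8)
The plan is to prove both parts by a direct induction that uses the defining recursion (\ref{sum_identity}) to collapse the three-term recurrence into a single-term one. I would handle part (a) in full detail; part (b) then follows by the identical argument with the roles of rows and columns, and of the coefficients $a$ and $c$, interchanged.

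For part (a), fix the row $i > 0$ and argue by induction on $j$ that $M[i, x+j] \equiv na^j$ for $0 \leq j < y - x$. The base case $j = 0$ is immediate, since $M[i, x] = n = na^0$ by the choice of $n$. For the inductive step, take $0 < j < y - x$ and apply the recursion (\ref{sum_identity}) to the entry $M[i, x+j]$; this is legitimate because $i > 0$ and $x + j \geq j > 0$, so the recursive clause (rather than a boundary clause) governs this entry. This gives
\[
M[i, x+j] \equiv aM[i, x+j-1] + \bigl(bM[i-1, x+j-1] + cM[i-1, x+j]\bigr).
\]
The hypothesis of part (a) asserts that the parenthesized sum is $\equiv 0$ modulo $p$ precisely for the range $0 < j < y - x$, so the recurrence collapses to $M[i, x+j] \equiv aM[i, x+j-1]$. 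Applying the inductive hypothesis $M[i, x+j-1] \equiv na^{j-1}$ then yields $M[i, x+j] \equiv a \cdot na^{j-1} = na^j$, completing the induction.

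Part (b) is the transpose of this argument: fix the column $j > 0$, induct on $i$ with base case $M[u, j] = n = nc^0$, apply (\ref{sum_identity}) to $M[u+i, j]$ (valid since $j > 0$ and $u + i > 0$ for $i > 0$), and use the stated hypothesis to cancel the $a$- and $b$-terms. This leaves $M[u+i, j] \equiv cM[u+i-1, j]$, whence $M[u+i, j] \equiv nc^i$ by induction.

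I do not anticipate a genuine obstacle here; the only point requiring care is the bookkeeping of index ranges. Specifically, I must confirm that the recursive clause of (\ref{sum_identity}), and not a boundary clause, applies to each entry touched by the induction, and that the index $j$ (resp.\ $i$) at which the recursion is unrolled lies in exactly the range $0 < j < y - x$ (resp.\ $0 < i < v - u$) on which the hypothesis guarantees the two-term cancellation. These ranges match by construction, so the needed cancellation is available at every step of the induction.
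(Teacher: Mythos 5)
Your proof is correct and follows essentially the same route as the paper's: an induction on $j$ (resp.\ $i$) that unrolls the recursion (\ref{sum_identity}) and uses the stated hypothesis to cancel the two terms involving the adjacent row (resp.\ column), leaving the single-term recurrence $M[i,x+j] \equiv aM[i,x+j-1]$. Your extra care with the index ranges and the applicability of the recursive clause is sound but does not change the argument; it is the same proof.
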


\begin{proof}
(a) For $j = 0$ we have $M[i, x + 0] = n$.  Having shown for an induction
that $M[i, x + j] \equiv nc^j$,
\begin{eqnarray}
M[i, x + j + 1] &\equiv& aM[i, x + j] + bM[i - 1, x + j] + cM[i - 1, x + j + 1] \\
&\equiv& aM[i, x + j] \\
&\equiv& na^{j + 1}.
\end{eqnarray}
The proof for (b) is similar.
\end{proof}

\begin{proofof}{\ Proof of \ref{maintheorem}}
We show the following by induction on $k$.  Note that (a)
implies that $M$ is numerically $p$-self-similar.
\begin{enumerate.alph}
  \item \label{part_a}
  For $s, t < p$ and $n = M[s, t]$,
  \[
    M(sp^{k - 1}, tp^{k - 1}, p^{k - 1}) \equiv
        n\cdot M(0, 0, p^{k - 1}) \mod p.
  \]
  That is, $M(sp^{k - 1}, tp^{k - 1}, p^{k - 1})$ is an $n$-block
  of size $p^{k - 1}$.

  \item \label{part_b}
  Let $n < p$ and Let $B$ be any $n$-block of size $p^k$.  Then
  for all $0 < i, j < p^k$,
  \begin{eqnarray*}
  aB[i, p^k - 1] + bB[i - 1, p^k - 1] &\equiv& 0 \mod p \mbox{, and} \\
  bB[p^k - 1, j - 1] + cB[p^k - 1, j] &\equiv& 0 \mod p.
  \end{eqnarray*}

  \item \label{part_c}
  Let $n < p$ and Let $B$ be any $n$-block of size $p^k$.  Then
  \begin{eqnarray}
  \label{part_c_first}
  B[0, p^k - 1] \equiv B[p^k - 1, 0] &\equiv& n \mod p \\
  \label{part_c_second}
  B[p^k - 1, p^k - 1] &\equiv& n \mod p.
  \end{eqnarray}

  \item \label{part_d}
  Let $s, t < p$ and let
  \begin{eqnarray*}
  A &=& M((s - 1)p^k, tp^k, p^k), \\
  B &=& M((s - 1)p^k, (t - 1)p^k, p^k), \\
  C &=& M((s - 1)p^k, tp^k, p^k) \mbox{, and} \\
  D &=& M(sp^k, tp^k, p^k).
  \end{eqnarray*}
  Suppose that $A$ is an $x$-block, $B$ is a $y$-block, and
  $C$ is a $z$-block and let $w \equiv ax + by + cz$.
  Then $D$ is a $w$-block.
\end{enumerate.alph}

Note first that (d) follows from (b) and (c).  That is, fix $k$
and assume that (b) and (c) hold, and assume the hypothesis
of (d).  Then applying (c),
\begin{eqnarray}
A[0, p^k - 1] &=& x, \\
B[p^k - 1, p^k - 1] &=& y,  \mbox{ and} \\
C[p^k - 1, 0] &=& z,
\end{eqnarray}
and so $D[0, 0] = w$.  By (b), the rightmost column of $A$
satisfies the hypothesis of Lemma \ref{adjacent_column_lemma}(a),
so $D[i, 0] = wc^i$ for $i < p^k$.  A similar argument shows that
$D[0, j] = wa^j$ for $j < p^k$.  Thus $D[i, 0] = wM[i, 0]$ and
$D[0, j] = wM[0, j]$ for $i, j < p^k$, so we have
$D[i, j] = wM[i, j]$ for all $i, j, < p^k$ by
Observation \ref{basic_observation}; hence $D$ is a $w$-block.

We first establish the base step for $k = 1$.  Part (a) 
asserts only
that $M[s,t] = M[s,t] \cdot M[0, 0]$.  Part (b) 
follows from
Lemma \ref{sum_lemma}.  For (c) 
\ref{part_c_first} is immediate
from Lemma \ref{bottom_row_lemma} and \ref{part_c_second} follows from (b) 
using the observation that for $0 \leq i < p$,
\[
M[i, p - 1] \equiv \alt{1}{\mbox{ $i$ is even}}{p - 1}{\mbox{ $i$ is odd}}
\]
(and if $p = 2$, $p - 1 = 1$).
Part (d) follows in general from (b) and (c)
as shown above.

Let $k \geq 1$ and assume the induction hypothesis holds.
We first show that for $t < p$, $M(0, tp^k, p^k)$ is an $n$-block
of size $p^k$, where $n = M[0, t]$.  The $t = 0$ case is the definition
of a $1$-block.  Suppose we have shown that
$C = M(0, (t - 1)p^k, p^k)$ is an $m$-block, where $m = M[0, t - 1]$.
Let $D = M(0, tp^k, p^k)$.  We know from Lemma \ref{bottom_row_lemma} that
$D[0, j] = na^j$ for $j < p^k$ and in particular $D[0, 0] = n$.
Since $C$ is an $m$-block of size $p^k$, by part (b) of the induction hypothesis
the rightmost column of $C$ satisfies the hypothesis of Lemma
\ref{adjacent_column_lemma}(b), and hence $D[i, 0] = nc^i$ for all $i < p^k$.
Since $D[i, 0] = nM[i, 0]$ for $i < p^k$ and $D[0, j] = nM[0,j]$ for $j < p^k$,
we have $D[i, j] = nM[i, j]$ for all $i, j < p^k$ using Observation
\ref{basic_observation}. Thus $D$ is an $n$-block.
A similar argument shows that $M(sp^k, 0, p^k)$ is an $M[s, 0]$-block
for each $s < p^k$.

The next step is to show, inducting on $s$ and $t$, that $M(sp^k, tp^k, p^k)$
is a $M[s, t]$-block. Let
\begin{eqnarray*}
x &=& M[s, t - 1], \\
y &=& M[s - 1, t - 1], \\
z &=& M[s - 1, t], \mbox{ and} \\
w &=& M[s, t] = ax + by + cz.
\end{eqnarray*}
and suppose that
\begin{eqnarray*}
A &=& M(sp^k, (t - 1)p^k, p^k) \mbox{ is an $x$-block,} \\
B &=& M((s - 1)p^k, (t - 1)p^k, p^k) \mbox{ is a $y$-block, and} \\
C &=& M((s - 1)p^k, tp^k, p^k) \mbox{ is a $z$-block.} \\
\end{eqnarray*}
Then by part (d) of the induction hypothesis,
$M(sp^k, tp^k, p^k)$ is a $M[s, t]$-block, which
establishes part (a).

Next, we know from (a) that for each $s < p$,
$B = M(sp^k, (p - 1)p^k, p^k)$ is a $M[s, p - 1]$-block.
Using part (b) of the induction hypothesis, we have
\begin{eqnarray}
\label{first_equivalence}
aB[i, p^k - 1] + bB[i - 1, p^k - 1] &\equiv& 0
\end{eqnarray}
for $0 < i < p^k$.  In addition, if $s < p - 1$
and $A = M((s + 1)p^k, (p - 1)p^k, p^k)$, then $A$ is
a $M[s + 1, p - 1]$-block, so by part (c) of the induction hypothesis,
\begin{eqnarray*}
A[0, p^k - 1] &=& M[s + 1, p - 1] \mbox{ and} \\
B[p^k - 1, p^k - 1] &=& M[s, p - 1].
\end{eqnarray*}
It follows that
\begin{eqnarray}
\label{second_equivalence}
aA[0, p^k - 1] + bB[p^k - 1, p^k - 1] &\equiv& 0.
\end{eqnarray}
Now consider any $0 < i < p^{k + 1}$; then
\begin{eqnarray}
\label{third_equivalence}
aM[i, p^{k + 1} - 1] + bM[i - 1, p^{k + 1} - 1] \equiv 0,
\end{eqnarray}
which follows from (\ref{second_equivalence}) if $i$ is
a multiple of $p^k$ and from (\ref{first_equivalence})
otherwise.  Then (b) is obtained by multiplying
(\ref{third_equivalence}) by $n$.

To establish (c) note that by Lemma \ref{corner_lemma},
\[
M[0, p^{k + 1} - 1] = M[p^{k + 1} - 1, 0] \equiv 1.
\]
Also, $B = M((p - 1)p^k, (p - 1)p^k, p^k)$ is a $M[p - 1, p - 1]$-block
by (a), where $M[p - 1, p - 1] = 1$, so using part (c) of the induction hypothesis,
\[
M[p^{k + 1}, p^{k + 1}] = B[p^k - 1, p^k - 1] \equiv 1.
\]
Then (c) is obtained by multiplying the equivalences above by $n$.
That (d) follows from (b) and (c) has already been shown above.
\end{proofof}

We conclude this section with one example of a numerically
self-similar fractal other than the Sierpinski carpet; in this case
with $a = 1$, $b = 2$, $c = 2$, and $p = 5$.  A portion of this structure
is shown in Figure~\ref{coolcarpet} with colors representing 
the $5$ numerical values.

\begin{figure}
\begin{center}
\includegraphics[width=5.0in]{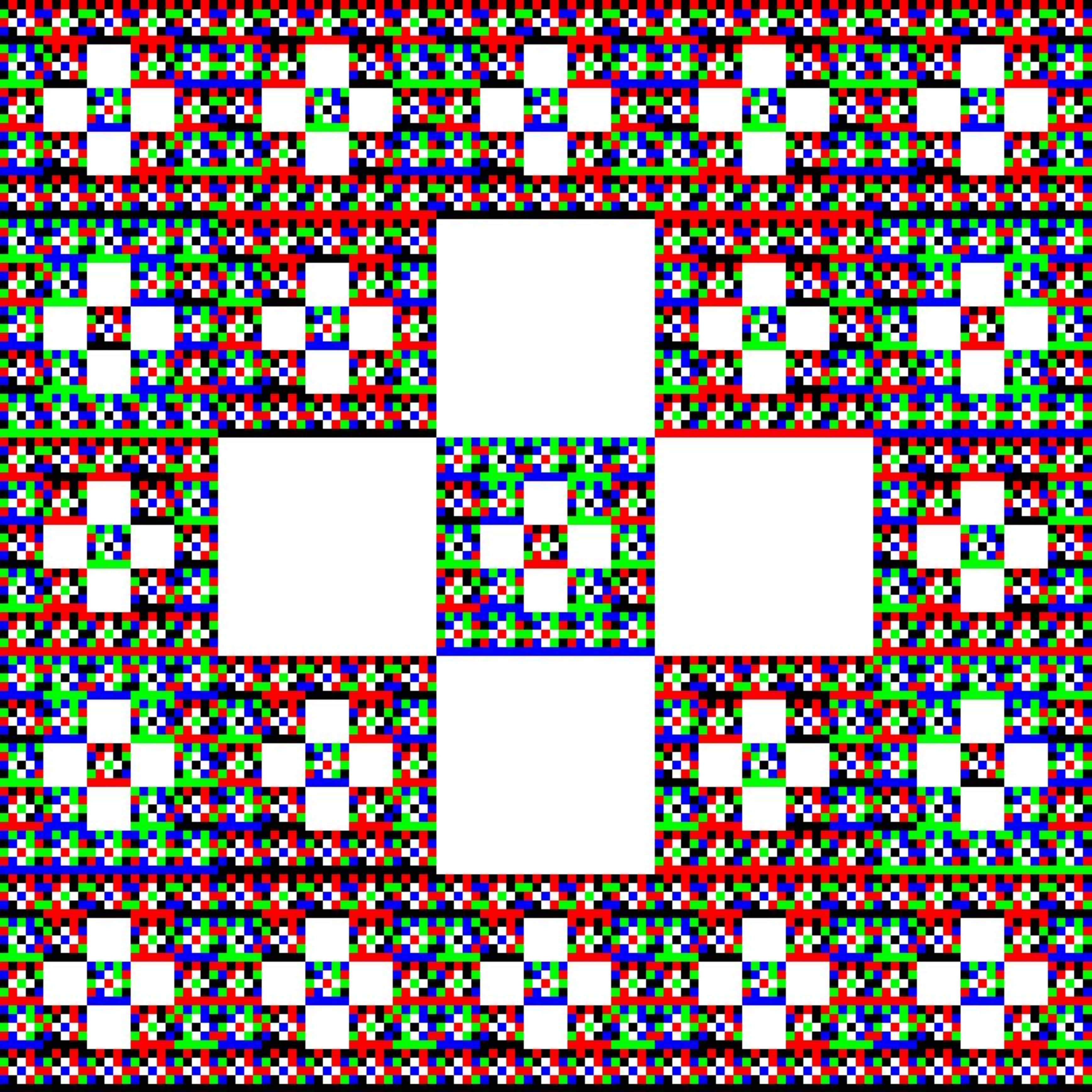}
\caption{Five stages of a numerically self-similar fractal with a=1, b=2, c=2, and p=5.}
\label{coolcarpet}
\end{center}
\end{figure}

\section{Conclusion}
\label{section_conclusion}
We have shown that the discrete Sierpinski carpet self-assembles in Winfree's Tile Assembly Model and, moreover, that it is an instance of an infinite class of discrete fractals that self-assemble.  The key ingredient of this result was Theorem \ref{maintheorem} that certain recursively generated infinite matrices have a strong self-similarity property, which we defined as numerical self-similarity.  Theorem
\ref{maintheorem} is a strong generalization of known results on self-similarity in Pascal's triangle which underlie the previous work on self-assembly of the discrete Sierpinski triangle.

The recursively generated matrices we study provide a rich source of examples of self-similar fractals, despite the obvious simplicity of the linear function used to generate them.  We have investigated matrices generated using more complex recursive relationships and the preliminary results are inconclusive; that is, although there are examples that appear to generate fractal structures (subsets of the plane with dimension strictly less than $2$), none of the structures observed so far is self-similar.

The discrete fractals we have investigated all self-assemble in a ``progressive'' way; that is, a tile that binds at location $(x,y)$ is always determined by tiles at locations $(x', y')$ with $x' \leq x$ and
$y' \leq y$.  An interesting question is that of finding an exact characterization of the self-similar fractals
that can be tiled progressively. A restriction to progressive tiling rules out self-similar fractals
with blocks of zeros along either axis.  It remains
open whether every symmetric, self-similar discrete fractal without zeros along the axes can be
tiled progressively.

\section*{Acknowledgments}
The authors wish to thank Jack Lutz for useful discussions.

\bibliographystyle{amsplain}
\bibliography{main,dim,random,dimrelated,rbm}

\end{document}